\newcommand{\cR}{\mathcal{R}}
\newcommand{\K}{\mathbb{K}}
\newcommand{\N}{\mathbb{N}}
\begin{document}

\title{On the Algorithmic Verification of Nonlinear Superposition for Systems of First Order Ordinary Differential Equations}
\shorttitle{Algorithmic Verification of Nonlinear Superposition}

\volumeyear{}
\paperID{}

\author{Veronika~Treumova\affil{1}, Dmitry~A.~Lyakhov\affil{2}, and Dominik~L.~Michels\affil{2}}
\abbrevauthor{V. Treumova, D.~A.~Lyakhov, and D.~L.~Michels}
\headabbrevauthor{V. Treumova et al.}

\address{%
\affilnum{1}National Institute for Pure and Applied Mathematics, Rio de Janeiro, Brazil\\
\affilnum{2}Computational Sciences Group, KAUST, KSA}

\correspdetails{\affil{1}veronika.treumova@impa.br and \affil{2}\{dmitry.lyakhov, dominik.michels\,\}@kaust.edu.sa}

\received{}
\revised{}
\accepted{}

\communicated{}

\maketitle

\section*{Abstract}

This paper belongs to a group of work in the intersection of symbolic computation and group analysis aiming for the symbolic analysis of differential equations. The goal is to extract important properties without finding the explicit general solution. In this contribution, we introduce the algorithmic verification of nonlinear superposition properties and its implementation. More exactly, for a system of nonlinear ordinary differential equations of first order with a polynomial right-hand side, we check if the differential system admits a general solution by means of a superposition rule and a certain number of particular solutions. It is based on the theory of Newton polytopes and associated symbolic computation. The developed method provides the basis for the identification of nonlinear superpositions within a given system and for the construction of numerical methods which preserve important algebraic properties at the numerical level.\\\\
{\bf Keywords:} Algorithmic Verification, Difference Thomas Decomposition, Newton Polytope, Nonlinear Superposition, Strong Consistency, Vessiot-Guldberg-Lie Algebra.

\section{Introduction}

Nonlinear ordinary differential equations (ODEs) play a crucial role in modeling a wide range of complex phenomena across various scientific disciplines. Unlike linear ODEs, which are well-understood and often analytically solvable in terms of elementary functions, nonlinear ODEs capture intricate interactions and dependencies that characterize many real-world systems ranging from physics and engineering to biology and economics. Generally, addressing these equations involves employing difference approximations for differential systems since analytical methods face significant constraints. Rather than seeking exact solutions, engineering tasks require obtaining an approximate solution with a specified level of precision. Furthermore, practical considerations often necessitate the use of an appropriate method that not only maintains the properties of the underlying system in the limit but also does so for finite time steps, achieving an approximation.

In this contribution, we focus on the symbolic analysis of nonlinear superposition properties of systems of ODEs. Nonlinear superposition is a generalization of the superposition principle to nonlinear systems. It states that, for certain nonlinear systems, the response to a combination of inputs can be expressed as a combination of the responses to the individual inputs. This means that the system's behavior is not always linear but still exhibits some degree of predictability. The nonlinear superposition property is important in many areas of physics, including quantum mechanics, fluid dynamics, and optics \cite{li2023nonlinear,bassi2013models,ackermann2006nonlinear,rogers2018madelung}. It is used to study the behavior of complex systems that cannot be easily described by linear equations. For instance, in quantum mechanics, the nonlinear superposition property is used to explain the behavior of particles that can exist in multiple states at the same time. This key element of modern quantum mechanics has many important consequences, such as the existence of quantum entanglement and wave-particle duality. In fluid dynamics, the nonlinear superposition property is used to study the behavior of waves in fluids. For example, it can be used to explain the formation of solitons, 
the interesting phenomenon of a single nonlinear wave that can travel long distances without changing shape. In optics, the nonlinear superposition property is used to explain the behavior of light in nonlinear media -- the phenomenon of self-focusing, i.e., a beam of light can focus itself as it travels through a nonlinear medium.

This paper is organized as follows. In Section~\ref{sec:MathFormulation}, we provide all required definitions and the basic theorem. After discussing related work in Section~\ref{sec:RelatedWork}, we briefly present an elementary example illustrating the relevance of the nonlinear superposition property in Section~\ref{sec:Riccati}. The main mathematical objects we deal with and the algorithms are then presented in Section~\ref{sec:Alg}. The implementation of these algorithms in Maple is described in Section~\ref{sec:Implement}, and a nontrivial example is illustrated in Section~\ref{sec:Nontrivial}. Finally, we provide our view on applications in Sections~\ref{sec:Thomas} and \ref{sec:MatrixRicatti}, and generalizations in Section~\ref{sec:Generalization} before our concluding remarks in Section~\ref{sec:Conclusion}.

\section{Mathematical Formulation}
\label{sec:MathFormulation}

Sophus Lie had an extraordinary geometric vision of problems. It helped him to simplify analytical calculations and often led him to
new theoretical concepts. One of them is the generalization of properties of linear equations, which led to the concept of nonlinear superposition. 
\begin{definition}
We say that a differential system admits a nonlinear superposition rule if its general solution can be represented in the form
$$
\textbf{x} = \Phi(\textbf{x}^1, \textbf{x}^2, \dots, \textbf{x}^m; C_1, \dots, C_n)\,,
$$
including a finite number $m$ of partial solutions and $n$ arbitrary constants.
\end{definition}

The nonlinear superposition problem is strongly connected to the field of Lie algebras \cite{lie1885allgemeine}. Together with Vessiot and Guldberg, Sophus Lie characterized all possible first-order ODE systems with superposition property:
\begin{theorem}
The system of first-order ODEs with a vector of dependent variables $\textbf{x} = (x_1, \dots, x_n)$ admits a nonlinear superposition if and only if it has the form of generalized separation of variables with $r < \infty$ members:
\begin{equation}\label{generalized-separation-variables-equation}
\frac{d \textbf{x}}{dt} = T_1(t) \xi^{i}_{1}(\textbf{x}) + T_2(t) \xi^{i}_{2}(\textbf{x}) + \dots + T_r(t) \xi^{i}_{r}(\textbf{x})\,,
\end{equation}
where the operators\footnote{We use the common summation rule in repeated indices.}
$$
X_{\alpha} = \xi_{\alpha}^i(\textbf{x}) \frac{\partial}{\partial x_i}, \,\alpha = 1, \dots r\,,
$$
satisfy the commutator relations
\begin{equation}\label{VGL}
[ X_{\alpha}, X_{\beta} ] = C_{\alpha \beta}^{\gamma} X_{\gamma}\,,
\end{equation}
in $C_{\alpha \beta}^{\gamma}$, which are called the structure constants. The equations mean that the operators span a Lie algebra $L_r$ of the finite dimension $r$. The number $m$ of necessary particular solutions is estimated by $nm \geq r$.
\end{theorem}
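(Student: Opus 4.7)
The plan is to establish each implication by exploiting the duality between the global superposition formula $\Phi$ and the infinitesimal generators $X_\alpha$.

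For the necessary direction, I would differentiate the identity $\mathbf{x} = \Phi(\mathbf{x}^1,\ldots,\mathbf{x}^m;C_1,\ldots,C_n)$ along the flow, using that each $\mathbf{x}^k$ is itself a solution of the same system. Writing $\mathbf{F}(t,\mathbf{x}) = d\mathbf{x}/dt$, this yields the functional identity
\begin{equation*}
\mathbf{F}(t,\Phi) = \sum_{k=1}^{m} \frac{\partial \Phi}{\partial \mathbf{x}^k}\,\mathbf{F}(t,\mathbf{x}^k).
\end{equation*}
Since the left-hand side is independent of $C_1,\ldots,C_n$ while the right-hand side depends on them only through $\Phi$, differentiating with respect to each $C_j$ and invoking the functional independence of the constants forces $\mathbf{F}(t,\cdot)$ to lie, at every fixed $t$, in a finite-dimensional subspace $V$ of smooth vector fields on $\R^n$ determined by $\Phi$ alone, and hence independent of $t$. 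Choosing a basis $\xi_1(\mathbf{x}),\ldots,\xi_r(\mathbf{x})$ of $V$ and extracting the time-dependent coordinates $T_\alpha(t)$ produces exactly the decomposition~(\ref{generalized-separation-variables-equation}).

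To obtain the closure relations~(\ref{VGL}), I would then use that the same $V$ must be invariant under the flows of all its members. Commuting the flows of $X_\alpha$ and $X_\beta$ and expanding to second order in $t$ shows $[X_\alpha,X_\beta] \in V$, which forces the existence of constants $C_{\alpha\beta}^{\gamma}$ satisfying $[X_\alpha,X_\beta] = C_{\alpha\beta}^{\gamma} X_\gamma$. For the converse implication I would invoke Lie's integration theorems: the finite-dimensional Lie algebra $L_r$ integrates to a local Lie group $G$ acting on $\R^n$, and solutions of~(\ref{generalized-separation-variables-equation}) take the form $\mathbf{x}(t) = g(t)\cdot\mathbf{x}_0$, with $g(t)\in G$ obeying a Lie-type equation driven by the $T_\alpha(t)$. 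The superposition rule $\Phi$ is then reconstructed by using $m$ particular solutions to pin down $g(t)$ from its values on enough generic points; this supplies $nm$ scalar relations that must determine the $r$-parameter group element, explaining the estimate $nm \geq r$.

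The main obstacle I anticipate is the separation-of-variables step in the necessary direction. The pointwise algebraic argument yielding the subspace $V$ must be upgraded to a genuinely uniform conclusion, so that the $\xi_\alpha$ are honest globally defined vector fields rather than a basis chosen leaf-by-leaf. This requires a constant-rank hypothesis on $\Phi$ with respect to the constants $C_j$, together with careful bookkeeping of dependencies; establishing the smoothness and $t$-independence of the extracted basis $\{\xi_\alpha\}$, and the fact that $V$ does not depend on the particular choice of solutions $\mathbf{x}^1,\ldots,\mathbf{x}^m$, is the technical heart of the proof.
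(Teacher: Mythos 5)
The paper does not prove this statement at all: it is quoted as the classical Lie--Vessiot--Guldberg theorem with a citation to Lie (1885), so there is no in-paper argument to compare yours against; your proposal has to be judged on its own merits as a proof of the classical result. Your sufficiency half is the standard route and is sound in outline: integrate $L_r$ to a local group action, write solutions as $\mathbf{x}(t)=g(t)\cdot\mathbf{x}_0$ with $g(t)$ solving a Lie equation on the group driven by the $T_\alpha(t)$, and recover $\Phi$ by using $m$ generic particular solutions to pin down $g(t)$, which is where $nm\geq r$ enters (one needs the prolonged action on $m$ copies of $\R^n$ to be locally free at generic points for this to determine $g$, a point worth stating explicitly).

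The necessity half, however, contains a genuine gap, and it sits exactly where you flag it. From the differentiated identity $\mathbf{F}(t,\Phi)=\sum_k(\partial\Phi/\partial\mathbf{x}^k)\,\mathbf{F}(t,\mathbf{x}^k)$, differentiating in the $C_j$ does not by itself force $\mathbf{F}(t,\cdot)$ into a finite-dimensional, $t$-independent space of vector fields: the resulting relations involve $\partial\Phi/\partial C_j$ and mixed second derivatives and do not isolate the vector field from its evaluation points. Likewise, the claim that the space $V$ ``must be invariant under the flows of all its members'' is asserted, not derived; bracket closure is part of what has to be proved. The standard rigorous route is different in mechanism: invert the superposition rule as $C=\Psi(\mathbf{x},\mathbf{x}^1,\dots,\mathbf{x}^m)$, i.e.\ $n$ functionally independent first integrals of the diagonal prolongation of the time-dependent field to $(\R^n)^{m+1}$; the prolonged fields are then tangent to a foliation of dimension $nm$, and the key lemma (Lie's, made precise by Cari\~nena--Grabowski--Marmo) says that if a diagonally prolonged field lies in the function-span of finitely many diagonally prolonged fields that are pointwise independent, the coefficients must be constants, because a diagonal prolongation carries the same coefficient functions in every factor. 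That single lemma simultaneously yields the separated form with time-dependent coefficients $T_\alpha(t)$, the closure $[X_\alpha,X_\beta]=C_{\alpha\beta}^{\gamma}X_\gamma$ (apply it to the prolonged brackets, which are again tangent to the foliation), and the bound $nm\geq r$. Without this prolongation step, or an equivalent constant-rank argument replacing it, your necessity direction does not go through.
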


The linear space (\ref{VGL}) is called Vessiot-Guldberg-Lie algebra to honor the memory of pioneers. Significant efforts were invested in classifying all systems with nonlinear superposition property, but up to now, only realizations of small dimensions like $3$ and $4$ have been obtained \cite{ibragimov2016three, ibragimov2017classification}.

However, the actual point of how to check the conditions of the theorem was missing. In this paper, we will derive an algorithm for verifying the condition of closeness for the important case of polynomially nonlinear $\xi^{i}_{k}(\textbf{x})$ functions. It is based on the theory of Newton polytopes and represents a constructive procedure to verify -- by a finite amount of steps -- if the Lie algebra is finite or not. Such a verification requires a huge amount of symbolic algebraic computations and is only possible to do using a computer algebra system. The general logic here is similar to a previous ISSAC paper \cite{lyakhov2017algorithmic} where linearizability criteria were derived. However, we use a completely different algebraic technique.

\begin{figure}
\centering
\includegraphics[width=0.45\textwidth]{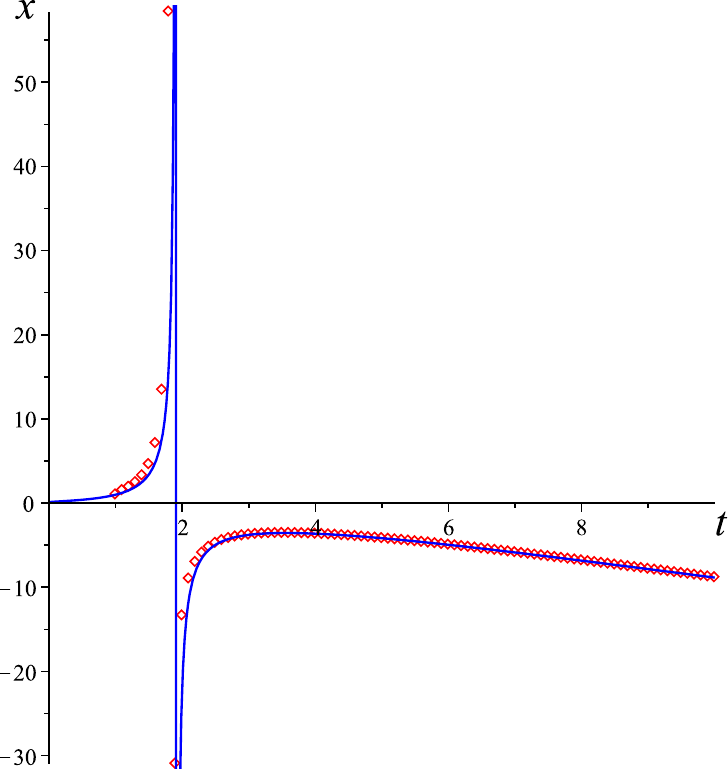}
\caption{Illustration of the exact (blue curve) and the numerical (red rhombi) solution of the Ricatti equation with coefficients (\ref{coeffs}).}
\label{fig:Ricatti}
\end{figure}

\section{Related Work}
\label{sec:RelatedWork}

There is a burgeoning interest in symbolic computation in disciplines such as bioinformatics \cite{fajiculay2022biosans} and biochemistry \cite{boulier2019symbiont}. Preprocessing of the differential systems is beneficial from a numerical perspective as recent work has shown the transformation of polynomial dynamical systems into quadratic ones \cite{bychkov2021optimal}, and exact hierarchical reductions to smaller and more tractable blocks by linear transformations \cite{demin2024exact}. Classical computer algebra achievements such as the Rosenfeld-Gr\"obner algorithm \cite{boulier1995representation} and the Differential Thomas Decomposition \cite{thomas1937differential,bachler2012algorithmic} are ubiquitously used now for the symbolic analysis of differential equations. It started from basic tasks in classical mathematical physics for partial differential systems like constructing all compatibility conditions, detecting the inconsistency of the system, deriving degrees of freedom in the general solution, and determining the coefficients of a power series solution. The general work in this direction is within the scope of Hilbert's Program for the algebraization of mathematics. 

The work of Sophus Lie on group analysis of differential equations was continued in the 20th and 21st centuries by Lev Ovsyannikov and his multiple followers, who applied Lie's method for the large-scale integration of differential equations that arose in practice. There has been recent interest in theoretical studies of the nonlinear superposition principle. E.g., Gainetdinova et al.~generalized the superposition theorem for systems of two second-order ODEs admitting four-dimensional Lie algebras \cite{gainetdinova2017integrability}. Carinena et al.~investigated an important class of constrained systems on Dirac manifolds, more precisely called Dirac-Lie systems, which also preserve the Hamiltonian structure and superposition rule \cite{carinena2014dirac}. Moreover, Ballesteros et al. \cite{ballesteros2021poisson} studied Lie-Hamiltonian systems endowed with the Poisson–Hopf deformation formalism, which extends nonlinear superposition theory to quantum mathematical physics.

\section{Riccati Equation}
\label{sec:Riccati}
The Riccati equation can be considered as the simplest example which admits a nonlinear superposition property
\begin{equation}\label{scalar-ricatti}
\frac{dx}{dt} = a_0(t) + a_1(t) x + a_2(t) x^2\,,
\end{equation}
so that any solution of it could be represented as a rational function of three known solutions $x_i(t), i = 1,\dots,4$ and a single constant $C$:
$$
x(t) = \frac{C (\mathit{x_1} \mathit{x_3} - \mathit{x_2} \mathit{x_3}) +\mathit{x_1} \mathit{x_3} -\mathit{x_1} \mathit{x_2} }{C (\mathit{x_1} - \mathit{x_2}) +\mathit{x_3} - \mathit{x_2}}\,.
$$
The superposition rule is equivalent to the symmetric property that the subharmonic ratio of four solutions remains the same for all $t$:
\begin{equation}\label{subharmonic-ratio}
\frac{x_4 - x_1}{x_1 - x_2} : \frac{x_4 - x_3}{x_2 - x_3} = C\,.
\end{equation}
The simple forward Euler scheme 
$$
\frac{x_{n+1} - x_{n}}{h} = a_0(t_n) + a_1(t_n) x_n + a_2(t_n) x_n^2\,,
$$
where $t_{n+1} = t_n + h$, according to a general logic of numerical methods, converges on intervals without singularities. However, it does not satisfy property (\ref{subharmonic-ratio}), and when approaching the infinity point $x(t) \rightarrow \infty$, the solution crashes.

In contrast, the slightly modified version -- the partially implicit Euler scheme
\begin{equation}\label{good-scheme}
\frac{x_{n+1} - x_{n}}{h} = a_0(t_n) + a_1(t_n) x_n + a_2(t_n) x_n x_{n+1}
\end{equation}
satisfies property (\ref{subharmonic-ratio}), and the solution can cross singularity points without an excessive error accumulation.

For example, let us consider in (\ref{scalar-ricatti}),
\begin{equation}\label{coeffs}
a_2(t) = \frac{1}{t},\,\,\,\,a_1(t) = 1,\,\,\,\,a_0(t) = 0\,.
\end{equation}
The general solution is given by means of exponential integrals\footnote{The function is not elementary as shown by applying the Risch algorithm \cite{risch1969problem}.}
$$
x(t) = (Ei(1, -t) + C)^{-1} {e^t}\,,
$$
where
$$Ei(a, z) = \int_1^{\infty} e^{-\tau z} \tau^{-a} d \tau\,.
$$
The solution of the Cauchy problem (\ref{scalar-ricatti}) with initial condition $x(1) =~1$ is illustrated in Figure~\ref{fig:Ricatti} in which the blue curve corresponds to the exact solution and the red rhombi represent the numerical solution obtained from the finite-difference scheme (\ref{good-scheme}).

In the next section, we consider the important class of nonlinear dynamical systems of the form (\ref{generalized-separation-variables-equation}) where
$$\xi^{i}_{k}(x) = \sum_{m=0}^N a_m(i,k) x^m$$
are multivariate polynomials of dependent variables. In this case, we can say if the corresponding set of differential operators forms a Lie algebra or if it does not admit finite closure by means of a Lie bracket.

\section{Algorithmics}
\label{sec:Alg}

This section introduces the primary mathematical entities we work with, along with the corresponding algorithms. The main object of consideration here are polynomial vector fields and their associated first-order differential operators:
$$
X = f_1 \frac{\partial}{\partial x_1} + \dots + f_n \frac{\partial}{\partial x_n}\,,
$$
where $f_i = f_i(x_1, x_2, \dots, x_n)$ -- polynomial function of all variables. The goal here is to understand if a finite set of operators
$$
\{X_1, X_2, \dots, X_m\}
$$
is closed under the Lie bracket or not. We will start from the scalar case when the dependent variable is only a single one.

\subsection{One-dimensional Case}
The one-dimensional case is not difficult to investigate, but at the same time, it provides an idea for the general case. 

Let $X_1,\dots,X_r,$ be operators, where $r>1.$ For each $i$, we denote $X_i=~f_i\frac{\partial}{\partial x}.$ We define the degree of operators as $deg(X_i):=deg(f_i)$.

The algorithm for checking nonlinear superposition is given by the following steps.
\begin{enumerate}
    \item Check degrees of the polynomials $\{f_i\}.$ If there are two polynomials $g$ and $h$ of different degrees which are greater or equal to 2, then algebra is infinite-dimensional. Otherwise, go to (2).
    \item Compute all pairwise commutators. Check the dimensions in the vector space spanned by $\{x^k\}_{k=0}^\infty$:
    \begin{itemize}
        \item If $dim(\langle f_i,[f_j,f_k]\rangle)=dim(\langle f_i\rangle),$ then the algebra is finite-dimensional.
        \item Otherwise go to (1) with the set $\{f_i,[f_j,f_k]\}$.
    \end{itemize} 
\end{enumerate}
The algorithm always stops working due to the criterion.

\begin{theorem}
    The algebra $\mathcal{A}$ spanned by $\{X_i\}$ where $X_i=f_i\frac{\partial}{\partial x}$ is infinite-dimensional if and only if, after adding all pairwise commutators a finite number of times, there will be two polynomials $g$ and $h$ of different degrees which are greater or equal to 2.
\end{theorem}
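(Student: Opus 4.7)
The plan is to prove the biconditional in two directions, both anchored to one short calculation of the Lie bracket of two polynomial vector fields on the line. For $f, g$ with $\deg f = a$, $\deg g = b$ and nonzero leading coefficients $\alpha, \beta$, one has
\begin{equation*}
[f\partial_x,\,g\partial_x] \;=\; (fg'-gf')\,\partial_x,
\end{equation*}
and the leading term of $fg'-gf'$ is $\alpha\beta(b-a)\,x^{a+b-1}$. Hence whenever $a\neq b$ the bracket has polynomial degree exactly $a+b-1$.

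For the reverse direction I would argue directly: suppose at some iteration the bracket closure contains polynomials $g,h$ of distinct degrees $a,b\geq 2$. By the formula above the bracket has degree $a+b-1$, which, because both $a,b\geq 2$, strictly exceeds $\max(a,b)$ and is itself at least $3$. Pairing the new element with whichever of $g,h$ has the larger degree, the next iteration again exhibits two operators with distinct polynomial degrees both $\geq 2$, and the induction continues. Thus $\mathcal{A}$ contains polynomials of arbitrarily large degree and cannot lie inside any finite-dimensional subspace $k[x]_{\leq D}\,\partial_x$, so it is infinite-dimensional.

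For the forward direction I would argue by contrapositive. Let $V_n$ denote the vector subspace of $k[x]\partial_x$ obtained after $n$ rounds of bracket closure from $V_0=\mathrm{span}\{X_1,\dots,X_r\}$, and put
\begin{equation*}
D_n^{\geq 2} \;=\; \{\deg f \,:\, f\partial_x\in V_n,\ f\neq 0\}\cap\{2,3,\dots\}.
\end{equation*}
Assume the triggering condition of the algorithm never fires, so $|D_n^{\geq 2}|\leq 1$ for every $n$. From $V_n\subseteq V_{n+1}$ we get $D_n^{\geq 2}\subseteq D_{n+1}^{\geq 2}$, and a monotone chain of sets of size at most one must stabilize. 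If it is identically empty, then every element of $\mathcal{A}$ has degree $\leq 1$ and $\mathcal{A}\subseteq\mathrm{span}\{\partial_x, x\partial_x\}$ is $2$-dimensional. Otherwise it stabilizes at a fixed $\{d\}$ with $d\geq 2$, in which case every nonzero element of $\mathcal{A}$ has degree $\leq 1$ or $=d$, so $\mathcal{A}\subseteq k[x]_{\leq d}\,\partial_x$ is of dimension at most $d+1$. Either way $\mathcal{A}$ is finite-dimensional, proving the contrapositive and simultaneously giving algorithmic termination, since the strictly ascending chain of dimensions must halt in a bounded ambient.

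The main delicate point to handle is the interpretation of ``two polynomials of different degrees $\geq 2$'' at the level of the subspace $V_n$ rather than a naive generating list: linear combinations of bracket iterates can realize degrees that no single basis element attains. Keeping the iterates in reduced form (e.g.\ row-echelon with respect to the monomial basis of $k[x]$) makes the check effective and ensures that the algorithm genuinely inspects the invariant $D_n^{\geq 2}$. Once this is in place, the one-line bracket-degree computation and the monotonicity argument above close both directions.
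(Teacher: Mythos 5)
Your proof is correct and takes essentially the same route as the paper: the bracket-degree formula (for $\deg f = a \neq b = \deg g$ the commutator has degree exactly $a+b-1$) yields sufficiency of the condition, and the characterization of infinite-dimensionality by unbounded degrees yields necessity. You simply flesh out what the paper's two-sentence proof leaves implicit, namely the contrapositive/stabilization argument for necessity and the list-versus-span (row-echelon) interpretation of ``two polynomials of different degrees $\geq 2$''.
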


\begin{proof} Firstly, observe that $\mathcal{A}$ is infinite-dimensional if and only if $\forall\; C>0\;\exists\;X=f\frac{\partial}{\partial x}\in\mathcal{A}$ such that $deg(f)>C.$
Therefore, sufficiency of the condition follows from the fact that the commutator of operators of degrees $d_1, d_2$ where $d_1\neq d_2,$ has degree $d_1+d_2-1.$ The condition is necessary as well since in the infinite-dimensional algebra, there always exist two operators of different degrees which are greater or equal to 2.
\end{proof}

\subsection{General Case}

The general case is treated in the same manner. The analog of polynomials from the previous section is Newton polytope.

We start with convenient notations. To a vector $(y_1,\dots,y_d)$ corresponds a linear vector field $Y = y_1 D_1 + \dots + y_d D_d,$ where $D_k = x_k\frac{\partial}{\partial x_k}.$ 

Any operator $X$ can be written as a Laurent polynomial, whose coefficients are linear vector fields $$X=\sum_{n\in \mathbb{Z}^d}x^nX(n)\,,$$
where $x^n=x_1^{n_1}\dots x_d^{n_d},\,X(n)=\sum\limits_{i=1}^dX_{n,i}D_i.$ We call $(X_{n,1},\dots,X_{n,d})$ the vector corresponding to $n.$ Observe that $X(n)\neq0$ only for a finite number of $n\in \mathbb{Z}^d.$ Thus, for a given operator $X$, we can construct a convex polytope $$N_X=Conv\{n\:|\: X(n)\neq0\}\,,$$
which is called a \textit{Newton polytope}.

Let again $X_1,\dots,X_r$ be operators, where $r>1,$ and $N_{X_1},\dots,N_{X_r}$ are the corresponding polytopes. The algorithm for checking nonlinear superposition is given by the following steps.
\begin{enumerate}
\item Check the vertices of polytopes $\{N_{X_k}\}$. If there are two vertices $v\in vert(N_{X_i})$ and $u\in vert(N_{X_j})$, $i\neq j$, with corresponding vectors $V$ and $U$ such that all five conditions
\begin{itemize}
    \item $|v+u|>max(|v|,|u|)$\,,
    \item $v+u$ is the vertex of $N_{X_i}+N_{X_j}$ (Minkowski sum)\,,
    \item $(u\cdot V)U - (v\cdot U)V\neq0$, with dot product $(\cdot)$\,,
    \item if $(u\cdot V)=0$ then $\forall m\in\mathbb{N}_{\geq1}\;(v+mu,U)\neq0$\,,
    \item if $(v\cdot U)=0$ then $\forall m\in\mathbb{N}_{\geq1}\;(mv+u,V)\neq0$\,,
\end{itemize}   
are satisfied, then the algebra is infinite-dimensional.

\item Compute all pairwise commutators and check dimensions:

\begin{itemize}
    \item If $dim(\langle \Xi_{X_\alpha}\rangle)=dim(\langle \Xi_{X_\alpha},\Xi_{[X_\beta,X_\gamma]} \rangle),$ then the algebra is finite-dimensional. 
    \item Otherwise go to (1) with operator set $\{X_\alpha,[X_\beta,X_\gamma]\}$.
\end{itemize}

\end{enumerate}

The algorithm always terminates due to the following criterion.
\begin{theorem}
    Algebra $\mathcal{A}$ spanned by $\{X_\alpha\}$ where $X_\alpha=\sum\limits_{i=1}^d f_{\alpha,i}\frac{\partial}{\partial x_i}$ is infinite-dimensional if and only if, after adding all pairwise commutators a finite number of times, there will be two vertices $v\in vert(N_{X_i}), u\in vert(N_{X_j})$, $i\neq j$, with corresponding vectors $V$ and $U$ such that all following five conditions are satisfied:
\begin{enumerate}[label=(\roman*)]
    \item $|v+u|>max(|v|,|u|)$\,,
    \item $v+u$ is the vertex of $N_{X_i}+N_{X_j}$ (Minkowski sum)\,,
    \item $(u\cdot V)U - (v\cdot U)V\neq0$, with dot product $(\cdot)$\,,
    \item if $(u\cdot V)=0$ then $\forall m\in\mathbb{N}_{\geq1}\;(v+mu,U)\neq0$\,,
    \item if $(v\cdot U)=0$ then $\forall m\in\mathbb{N}_{\geq1}\;(mv+u,V)\neq0$\,.
\end{enumerate}   
\end{theorem}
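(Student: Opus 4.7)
The plan is to mirror the one-dimensional argument while replacing polynomial degree by the Newton polytope and degree-additivity of brackets by the Minkowski-sum structure of the commutator. The central technical ingredient is the monomial-bracket formula: for constant-coefficient linear vector fields $V=\sum_{i}V_{i}D_{i}$ and $U=\sum_{i}U_{i}D_{i}$, using $V(x^{u})=(u\cdot V)\,x^{u}$ together with $[D_{i},D_{j}]=0$, a direct expansion gives
\begin{equation*}
[\,x^{v}V,\,x^{u}U\,]=x^{v+u}\bigl[(u\cdot V)\,U-(v\cdot U)\,V\bigr].
\end{equation*}
This identifies condition (iii) as the statement that the bracket carries a nonzero vector at exponent $v+u$, while (i) and (ii) encode the geometric requirement that this contribution produces an extreme point of the bracket's Newton polytope lying strictly farther from the origin than its parents.

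For sufficiency, I would assume that after finitely many rounds of adding pairwise commutators there exist vertices $v\in\mathrm{vert}(N_{X_{i}})$ and $u\in\mathrm{vert}(N_{X_{j}})$ with associated vectors $V,U$ satisfying (i)--(v). Condition (ii) forces the decomposition of $v+u$ into points of the two parent polytopes to be unique, so no cancellation with contributions from other monomials can occur; together with (iii), this produces a nonzero vector $W:=(u\cdot V)U-(v\cdot U)V$ at position $v+u$ in $[X_{i},X_{j}]$, and by (i) this is a new vertex outside the union of the two parent polytopes. Bracketing iteratively with $X_{j}$ sets up a recursion $W_{m+1}=(u\cdot W_{m})U-((v+mu)\cdot U)W_{m}$ for the vectors $W_{m}$ attached to exponents $v+mu$. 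In the generic case $(u\cdot V)\neq 0$ an induction on $m$ shows $W_{m}\neq 0$ throughout; in the degenerate case $(u\cdot V)=0$ we have $W\parallel V$, and condition (iv) is exactly what prevents the recursion from collapsing. Condition (v) plays the symmetric role for the chain $mv+u$. Hence $\mathcal{A}$ contains operators whose Newton polytopes extend to arbitrarily distant points, so $\dim\mathcal{A}=\infty$.

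For necessity, suppose $\mathcal{A}$ is infinite-dimensional and let $\mathcal{A}_{k}$ be the span of iterated commutators of length at most $k$. Since Laurent polynomial vector fields with exponents confined to any prescribed bounded region form a finite-dimensional vector space, some polytope attached to a generator of $\mathcal{A}_{k}$ must acquire an extreme point $w$ outside every polytope seen up to stage $k-1$. By the bracket formula, such a $w$ is forced to be of the form $v+u$ for vertices $v,u$ of the two parent polytopes, which yields (ii) immediately; (iii) follows because the new operator actually records a nonzero vector at $w$, and (i) because $w$ was chosen to lie beyond the previously known vertices. The argument closes by observing that if at every stage every candidate pair failed (iv) or (v), every chain $\{v+mu\}$ or $\{mv+u\}$ would vanish at some finite $m$, confining the union of Newton polytopes to a bounded region and contradicting the unbounded growth.

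The hard part is the necessity direction, and within it the precise accounting of the degenerate subcases encoded by (iv) and (v). A vector attached to an iterated extreme point can become parallel to one of its parent vectors, and then the next bracket in the chain may produce a zero coefficient exactly at the Newton vertex that was supposed to push the polytope outward. The cleanest route I foresee is an induction on the stage $k$ combined with a geometric analysis of the supporting hyperplanes to the Minkowski sum (to guarantee uniqueness of the decomposition $w=v+u$), coupled with a careful case analysis of the orthogonality relations $(u\cdot V)=0$ and $(v\cdot U)=0$ that trigger the degenerate regimes. This translates the combinatorial criterion of the theorem into the nonvanishing of an entire family of iterated commutator coefficients, which is the technical heart of the proof.
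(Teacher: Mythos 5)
Your proposal takes essentially the same route as the paper's own proof: the same reduction of infinite-dimensionality to unboundedness of Newton-polytope vertices, the same monomial bracket identity $[x^vV,x^uU]=x^{v+u}\left((u\cdot V)U-(v\cdot U)V\right)$, and the same sufficiency argument via the iterated chains $v+mu$ and $mv+u$ kept alive by conditions (iv)--(v), with necessity left at the level of a sketch. In fact your write-up is somewhat more explicit than the paper (which asserts necessity in one sentence and leaves the chain argument as ``easy to deduce''), so the comparison raises no substantive discrepancy.
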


\begin{proof}
Firstly, observe that $\mathcal{A}$ is infinite-dimensional if and only if $\forall\; C>0\;\exists\;X=\sum\limits_{i=1}^d f_i\frac{\partial}{\partial x_i}\in\mathcal{A}$ such that\\
$$max(\{|v| : v\in vert(N_X)\})>C\,.$$

\textit{Sufficiency.} Suppose $a$ and $b$ are vertices satisfying the conditions (i)-(v). Please note, that $$[x^vV,x^uU] = x^{v+u}(u\cdot V)U - (v\cdot U)V)\,.$$
With $K_{u,v}=(u\cdot V)U - (v\cdot U)V$, it follows that 
\begin{equation*}
\begin{aligned}
    N_{[X,Y]}\supset Conv\{&v+u\:| v\in \,vert(X),\,u\in vert(Y)\,,\\
\: \,&v+u\in vert(X+Y),\,K_{u,v}\neq0\}\,.
\end{aligned}
\end{equation*}
Thus, it is easy to deduce that either the set $\{|a+mb|\}_{m\geq C_1}$ or $\{|ma+b|\}_{m\geq C_2}$ is a desired increasing unbounded sequence, where $C_1,C_2\in\mathbb{N}$ are constants.

\textit{Necessity.} In the infinite-dimensional algebra, there always exists a pair satisfying the conditions (i)-(v). 
\end{proof} 

\section{Implementation}
\label{sec:Implement}

At the core of our algorithms is the symbolic manipulation with Newton polytopes. We use the Maple-based Convex library \cite{franz2009convex} for these purposes.
It was developed by Franz based on the theory of Kushnirenko's bound for the sum of the Milnor numbers of the complex polynomial singularities \cite{kouchnirenko1976polyedres}. To our knowledge, it is the most efficient library for symbolic tasks in convex geometry.

\begin{algorithm}{\textsl{\bfseries{Check Superposition: Dimension 1}}\,($operators$)
\label{check-superpos1}}
\begin{algorithmic}[1]
\INPUT $operators$, list of polynomial vector fields
\OUTPUT {\tt True}, if it is closed under Lie bracket,\\\,\,\,\,\,\,\,\,\,\,\,\,\,\,$\tt False$, otherwise
\STATE $dim:=${\textsl{\bfseries{DimLinSpace}}}\,$(${\textsl{\bfseries{LieAlgebra}}}\,($operators$));
\STATE $polys:=${\textsl{\bfseries{ExtractPolynomials}}}\,($operators$);  
\WHILE{True}  
\STATE $degrees:=convert(map(degree,polys),set)$; \STATE $m_1:=max(degrees)$; 
\STATE $m_2:=max(degrees\setminus\{max(degrees)\})$;   
\IF {$(m_1>1)\wedge(m_2>1)$} 
\RETURN {\tt False};
\ELSE   
\STATE $operators:=${\textsl{\bfseries{AddPairwiseCommutators}}}\,($operators$); 
\IF{{\textsl{\bfseries{DimLinSpace}}}\,$(${\textsl{\bfseries{LieAlgebra}}}\,($operators$)) $>dim$}
\STATE $dim:=dim(${\textsl{\bfseries{DimLinSpace}}}\,$(${\textsl{\bfseries{LieAlgebra}}}\,($operators$)); 
\ELSE
\RETURN {\tt True};
\ENDIF
\ENDIF
\ENDWHILE
\end{algorithmic}
\end{algorithm}

\begin{algorithm}{\textsl{\bfseries{Check Superposition: General Case}}\,($operators$)
\label{check-superpos}}
\begin{algorithmic}[1]
\INPUT $operators$, list of polynomial vector fields
\OUTPUT {\tt True}, if it is closed under Lie bracket,\\\,\,\,\,\,\,\,\,\,\,\,\,\,\,$\tt False$, otherwise
\STATE $dim:=${\textsl{\bfseries{DimLinSpace}}}\,$(${\textsl{\bfseries{LieAlgebra}}}\,($operators$));
\STATE $polys:=${\textsl{\bfseries{ExtractPolynomials}}}\,($operators$);
\STATE $polytopes:=${\textsl{\bfseries{NewtonPolytopes}}}\,($operators$);
\STATE $cycle:=true$  
\WHILE{cycle}  
\FOR{$k=1$ to $numelems(polytopes)$} 
\FOR{$l=1$ to $numelems(polytopes)$} 
\IF{$k>l$}
\FOR{$a\in vertices(polytopes[k])$}
\FOR{$b\in vertices(polytopes[k])$}
\STATE $c:=a+b$;
\STATE $v_a:=${\textsl{\bfseries{GetCoeffs}}}\,$(a)$;
\STATE $v_b:=${\textsl{\bfseries{GetCoeffs}}}\,$(b)$;
\STATE 
$v_c:=${\textsl{\bfseries{DotProduct}}}\,$(b, v_a)v_b-${\textsl{\bfseries{DotProduct}}}\,$(a, v_b)v_a$;
\STATE 
$s_1:=0$;
\STATE 
$s_2:=0$;
\IF{{\textsl{\bfseries{DotProduct}}}\,$(b, v_a)=0$} 
\IF{{\textsl{\bfseries{DotProduct}}}\,$(b, v_b)\neq0$} 
\STATE
$s_1:=${\textsl{\bfseries{DotProduct}}}\,$(a, v_b)/${\textsl{\bfseries{DotProduct}}}\,$(b, v_b)$; 
\ENDIF
\ENDIF
\IF{{\textsl{\bfseries{DotProduct}}}\,$(a, v_b)=0$} 
\IF{{\textsl{\bfseries{DotProduct}}}\,$(a, v_a)\neq0$} 
\STATE
$s_2:=${\textsl{\bfseries{DotProduct}}}\,$(b, v_a)/${\textsl{\bfseries{DotProduct}}}\,$(a, v_a)$; 
\ENDIF
\ENDIF
\IF{$|c|>max(|a|,|b|)\wedge$\\ 
$c\in vertices(${\textsl{\bfseries{MinSum}}}\,$(polytopes[k], polytopes[l])$\\
$\wedge|v_c|\neq0\wedge(type(s_1,integer)=false\vee 0\leq s_1) \wedge$\\
$(type(s_2,integer)=false\vee 0\leq s_2)$
}
\RETURN {\tt False};
\ENDIF
\ENDFOR
\ENDFOR
\ENDIF
\ENDFOR
\ENDFOR
\STATE $operators:=${\textsl{\bfseries{AddPairwiseCommutators}}}\,($operators$); 
\IF{{\textsl{\bfseries{DimLinSpace}}}\,$(${\textsl{\bfseries{LieAlgebra}}}\,($operators$)) $>dim$}   
\STATE $dim:=dim(${\textsl{\bfseries{DimLinSpace}}}\,$(${\textsl{\bfseries{LieAlgebra}}}\,($operators$)); 
\ELSE
\RETURN {\tt True};
\ENDIF
\ENDWHILE
\end{algorithmic}
\end{algorithm}

The implementations of our two algorithms -- for the one-dimensional and for the general case -- are available on GitHub.\footnote{\url{https://github.com/treverona/nonlinear-superposition}}

\section{Non-trivial Example}
\label{sec:Nontrivial}

In order to construct an example, we use a simple trick: The Symmetry algebra of an ODE is finite-dimensional if differential order $n \geq 2$ and it is obviously closed under the Lie bracket. We consider the following system with variables $u(w)$ and $v(w)$:
$$
\frac{du}{dw} = v, \frac{d^2 v}{dw^2} = 0\,.
$$
The Lie symmetry algebra is defined by infinitesimal generators:
$$
X := \eta_u(w,u,v)) \frac{\partial}{\partial u} + \eta_v(w,u,v)) \frac{\partial}{\partial v} + \xi_w(w,u,v)) \frac{\partial}{\partial w}\,.
$$
As linear space it has the dimension $10$ and consists of the following operators:
\begin{flalign*}
&\eta_u(w,u,v)) = \frac{1}{2} \left( C_2 v^2 + C_1 \right) w^2 + \frac{1}{2} \left( C_4 v^2 + 2 C_3 u + 2 C_6 \right) w - \\
&- 2 C_2 u^2 + \frac{1}{2} \left( 2 C_8 + 2 C_5 \right) u + \frac{1}{2} C_7 v^2 + C_{10}\,, \\
&\eta_v(w,u,v)) = C_2 v^2 w + C_1 w - 2 C_2 u v + C_3 u + \frac{1}{2} C_4 v^2 + C_5 v + C_6\,, \\
&\xi_w(w,u,v)) = \frac{1}{2}(2 C_2 v + C_3) w^2 + (-2 C_2 u + C_4 v + C_8) w - C_4 u + C_7 v + C_9\,.
\end{flalign*}
The extraction of the two-dimensional Lie subalgebra leads to
\begin{flalign*}\label{nontrivial-example}
\frac{d}{d t}u(t) &=
\mathit{C_1}(t) \left(\frac{v(t)^{2} w(t)^{2}}{2}-2 u(t)^{2}\right)+\mathit{C_2}(t) u(t) w(t)\,,\\
\frac{d}{d t}v(t) &=
\mathit{C_1}(t) \left(\frac{v(t)^{2} w(t)}{2}-2 u(t) v(t)\right)+\mathit{C_2}(t) u(t)\,,\\
\frac{d}{d t}w(t) &=
\mathit{C_1}(t) \left(\frac{v(t) w(t)^{2}}{2}-2 u(t) w(t)\right)+\mathit{C_2}(t) \frac{w(t)^2}{2}\,.
\end{flalign*}

\section{Difference Thomas Decomposition}
\label{sec:Thomas}

The construction of numerical schemes is a challenging topic in mathematical modeling of natural phenomena \cite{thomas2013numerical}. From a naive point of view, all Cauchy problems for first-order ODE systems are covered by a basic theorem of Leonhard Euler, which is simple to use because of its explicit character.

\begin{theorem}(\cite{leveque2007finite}, pp.~6.3.3)\label{euler-theorem}
Let us suppose a first-order differential system
$$
\frac{dx}{dt} = f(x,t)
$$
with initial value $x(t_0) = x_0$, and smooth right-hand side $f$. The unique solution $x(t),\,t \in (a,b)$ defined by the Cauchy problem could be approximated by an explicit scheme
$$
\tilde{x}_{n+1} - \tilde{x}_{n} = h f(x_n, t_n)\,,
$$
and it converges for $x_n \rightarrow x(t)$ in $||\cdot||_{\infty}$ if $|x(t)| < \infty, \,t \in~(a,b)$.
\end{theorem}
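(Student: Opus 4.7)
The plan is to prove convergence by the classical consistency-plus-stability route. First I would introduce the local truncation error
$$
\tau_n := \frac{x(t_{n+1}) - x(t_n)}{h} - f(x(t_n), t_n),
$$
and show that $\max_n |\tau_n| = O(h)$. The hypothesis that $|x(t)| < \infty$ on $(a,b)$ confines the exact trajectory to a compact set on every closed subinterval; combined with smoothness of $f$, repeated differentiation of $x'(t) = f(x(t), t)$ shows $x \in C^2$ with $x''$ uniformly bounded. Taylor's theorem then gives $|\tau_n| \le M h$ for a constant $M$ depending only on $f$ and the trajectory.

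Next I would set up the global error $e_n := x(t_n) - \tilde{x}_n$. Subtracting the explicit Euler update from the consistency identity produces
$$
e_{n+1} = e_n + h\bigl[f(x(t_n), t_n) - f(\tilde{x}_n, t_n)\bigr] + h\,\tau_n.
$$
Smoothness of $f$ on a tubular neighborhood of the trajectory supplies a Lipschitz constant $L$ in the first argument, so
$$
|e_{n+1}| \le (1 + hL)\,|e_n| + h\,|\tau_n|.
$$
A discrete Gr\"onwall inequality, using $(1+hL)^n \le e^{L(t_n - t_0)}$, then yields
$$
|e_n| \le e^{L(t_n - t_0)}\,|e_0| + \frac{e^{L(t_n - t_0)} - 1}{L}\,\max_k |\tau_k|.
$$
Since $e_0 = 0$ and $\max_k |\tau_k| = O(h)$, I conclude $\max_n |e_n| \to 0$ as $h \to 0$, which is exactly convergence in $||\cdot||_{\infty}$.

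The main obstacle I expect is that the stability estimate tacitly assumes the numerical iterates $\tilde{x}_n$ themselves remain inside the tube on which the uniform Lipschitz constant $L$ is valid; because $f$ is only assumed smooth, not globally Lipschitz, this must be justified. The fix is a short bootstrap argument: on any fixed compact subinterval $[t_0,T]\subset(a,b)$, choose a tube radius $r$ so that the exact trajectory stays at distance $\ge r$ from the tube boundary, and let $L$ and $M$ be the corresponding Lipschitz and truncation constants. The Gr\"onwall bound above, applied inductively, shows that for $h$ small enough one has $|e_n| < r$ for all $n \le (T-t_0)/h$, so the iterates cannot leave the tube before step $N$. This closes the circular dependence and delivers uniform convergence on each compact subinterval, which is the precise content of the conclusion.
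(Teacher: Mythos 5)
Your argument is the classical consistency-plus-stability proof (Taylor expansion for the $O(h)$ truncation error, a local Lipschitz bound on a tube around the trajectory, discrete Gr\"onwall, and a bootstrap to keep the iterates inside the tube), and it is correct, including the careful handling of the merely locally Lipschitz right-hand side on compact subintervals of $(a,b)$. The paper itself gives no proof of this theorem -- it simply cites LeVeque's textbook -- and your route is essentially the standard argument found there, so there is nothing to reconcile.
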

In practice, it is required to apply very small time steps using explicit methods. Including the fact that numerical computations are carried out with floating-point arithmetic -- after a lot of steps -- a significant accumulation of errors occurs. That is why the explicit Euler method becomes unpractical, and to handle robust computation, it is important to care about the geometric property of the solution, such as symplecticity, and physical properties, such as conservation laws. Informally speaking, the general rule is to preserve as many properties as possible at a numerical level.

On the other hand, linearity is the most important property of differential equations, and for linear equations, without even thinking about it, we usually construct a linear finite-difference scheme. Remarkably, nonlinear superposition plays almost the same role for differential equations, which makes it the first property that we would like to preserve when we solve the problem numerically (as clearly seen in Section~\ref{sec:Riccati}).

The construction of numerical schemes can benefit greatly from the analog of differential the Thomas decomposition \cite{bachler2012algorithmic} adopted for difference equations. Potentially, for a finite-difference scheme with undefined parameters and known superposition rule (like the subharmonic ratio in \ref{subharmonic-ratio}), it will yield a desirable scheme or show that it is not possible to obtain it. There is ongoing research in the field based on the algebraic Thomas decomposition and specific properties of differences ideals \cite{gerdt2019algorithmic, gerdt2020strong}. It gives hope that progress in the field will lead to an opportunity to construct numerical schemes with advanced properties.

\section{Matrix Ricatti Equation}
\label{sec:MatrixRicatti}

Scalar Ricatti equations admit wonderful generalizations to matrix ones\footnote{Certain authors employ the phrase ``multidimensional Riccati equation'' to encompass all quadratic equations, whereas we prefer to reserve the term ``Riccati'' specifically for those equations that retain the nonlinear superposition property.}, which is worth to mention in our manuscript \cite{penskoi2004discrete}:
\begin{equation}\label{matrix-ricatti}
\frac{dW}{dt} = A(t) + B(t) W + W C(t) + W D(t) W\,,
\end{equation}
where $W$ is an $(n \times k)$-matrix with arbitrary $n, k \in \mathbb{N}$ and $A, B, C, D$ are matrices of appropriate dimensions.

The beautiful result that a two-parameter class of operators $U_{q,h}$ with

\begin{equation*}
U_{q,h} f(t) =
\begin{cases}
f'(t)\,, & \text{if } q = 1\,, h = 0\,, \\
\frac{f(qt+h) - f(t)}{(q-1)t + h}\,, & \text{else}\,. 
\end{cases}
\end{equation*}
defines the difference analog of the Ricatti equation, which admits the superposition formula
$$
U_{q,h} w(t) = a(t) + b(t) w(t) + w(qt+h) c(t) + w(qt+h) d(t) w(t)\,,
$$
and in particular for the case $q = 1$ and the limit case $h \rightarrow 0$, it converges to the differential matrix Ricatti equation. We will show that in addition to nonlinear superposition, it is possible to construct a scheme that is also strongly consistent \cite{blinkov2018strongly,michels2019consistency}.

Let us briefly remind the definition of strong consistency. We consider polynomially nonlinear ordinary (partial) differential equation systems of the form
\begin{equation}
f_1=\dots=f_p=0,\quad F:=\{f_1,\ldots,f_p\}\subset {\mathcal{R}}\,,
\label{pde}
\end{equation}
where $f_1,\ldots, f_p$ are elements of the differential polynomial ring $${\cR}:={\K}[t^{1},\ldots,t^{m}]$$ over a differential coefficient field ${\K}$. The differential polynomial ring ${\cR}$ contains polynomials in the dependent variables 
$$\mathbf{w}:=\{w^{1},\ldots,w^{n}\}$$ and their partial derivatives $\partial_{x_1}, \ldots,\partial_{x_n}$.

The coefficients on the grid are elements of the difference field with mutually commuting difference operators $\{\sigma_1,\ldots,\sigma_n\}$ acting on a function $\phi(\mathbf{x})$ as follows:
\begin{equation*}
\sigma_i\circ
\phi(x_1,\ldots,x_n)=\phi(x_1,\ldots,x_i+h_i,\ldots,x_n)\,,\quad \,h_i>0\,.
\label{rs-operators}
\end{equation*}
We refer to $\sigma_i$ also as shift operators. Analogously, we define $\tilde{\cR}$ as a difference polynomial ring.

\begin{definition}\label{def:10} We shall say that a difference equation $\tilde{f}(\mathbf{u})=0$ defined on the orthogonal and uniform grid with the grid spacing set $\mathbf{h}:=(h_1,\ldots,h_n)$ {\em implies the differential equation} $f(\mathbf{u})=0$ and write
$\tilde{f}\rhd f$ if the Taylor expansion about a grid point yields
\[
\tilde{f}(\mathbf{u})\xrightarrow[h_i\rightarrow 0]{} f(\mathbf{u}) + O(\mathbf{h})\,,
\]
where $O(\mathbf{h})$ denotes terms that reduce to zero when $h_i\rightarrow 0$ for $i=1,\ldots,n$.
\end{definition}

In the theory of strongly consistence finite-difference approximations, it is important to consider perfect difference ideals.

\begin{definition}{\em\cite{levin2008difference}} A {\em perfect difference ideal} generated by a set $\tilde{F}\in \tilde{\mathcal{R}}$ and denoted by $\llbracket \tilde{F}\rrbracket$ is the smallest difference ideal containing $\tilde{F}$ and such that for any $\tilde{f}\in {\mathcal{R}}$, $\theta_1,\ldots,\theta_r\in \Theta$ and $k_1,\ldots,k_r \in \N_{\geq 0}$:
\[
(\theta_1\circ \tilde{f})^{k_1}\dots (\theta_r\circ \tilde{f})^{k_r}\in \llbracket \tilde{F}\rrbracket \Longrightarrow \tilde{f}\in \llbracket \tilde{F}\rrbracket \,.\]\label{perfect}
\end{definition}
\vskip -0.6cm
In the context of difference algebra, perfect ideals play the same role as radical ideals in differential algebra. The clarity of this fact becomes evident when examining the difference analogue of the Nullstellensatz~\cite{trushin2009difference}.  For this reason, we will contemplate the perfect ideal $\llbracket \tilde{F}\rrbracket$ generated by the difference polynomials in the finite-difference approximation as the set of its difference-algebraic consequences. Respectively, the set of differential-algebraic consequences of a differential system is the radical differential ideal generated by the set $F$ in (\ref{pde}).
\begin{definition}{\em\cite{gerdt2019algorithmic}}
A finite-difference approximation to a differential system (\ref{pde}) is {\em strongly consistent} if
\begin{equation*}
(\,\forall \tilde{f}\in \llbracket \tilde{F} \rrbracket\,)\  (\,\exists
f\in \llbracket F \rrbracket \,)\ [\,\tilde{f}\rhd f\,]\,. \label{s-cond}
\end{equation*}
\label{def-scon}
\end{definition}
\vskip -0.6cm
Now, let us consider the square matrix case ($k = n$). Based on the operator $U_{q,h}$, we can construct a finite-difference scheme with desired properties.
\begin{theorem}
The finite difference numerical scheme 
\begin{equation}\label{difference-matrix-system}
\frac{W(t_{n+1}) - W(t_n)}{h} = A(t_n) + B(t_n) W(t_n) + W(t_{n+1}) (C(t_n) + D(t_n) W(t_n))\,,
\end{equation}
where $t_{n+1} = t_{n} + h$ with time step $h$, satisfies the three properties:

\begin{itemize}
\item $W(t_{n}) \rightarrow W(t_{n+1})$ is an invertible linear mapping\,;
\item difference solution admits nonlinear superposition formula\,;
\item as a numerical difference scheme, it is strongly consistent\,.
\end{itemize}
\end{theorem}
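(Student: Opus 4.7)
The plan is to verify the three properties in sequence, using the classical correspondence between the matrix Riccati equation and a linear Hamiltonian flow on a doubled space.

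\textbf{Invertible linear structure.} First I would solve (\ref{difference-matrix-system}) explicitly for $W(t_{n+1})$. Collecting the unknown on the left gives
$$W(t_{n+1})\bigl(I - h C(t_n) - h D(t_n) W(t_n)\bigr) = (I + h B(t_n))\,W(t_n) + h A(t_n)\,,$$
and since the factor on the left differs from the identity by $O(h)$ it is invertible for $h$ small enough, so
$$W(t_{n+1}) = \bigl((I + h B(t_n))\,W(t_n) + h A(t_n)\bigr)\bigl(I - h C(t_n) - h D(t_n) W(t_n)\bigr)^{-1}\,.$$
This is the matrix M\"obius action of the $2n \times 2n$ block matrix
$$M_n = \begin{pmatrix} I + h B(t_n) & h A(t_n) \\ -h D(t_n) & I - h C(t_n) \end{pmatrix} = I_{2n} + h \begin{pmatrix} B(t_n) & A(t_n) \\ -D(t_n) & -C(t_n) \end{pmatrix},$$
which is itself invertible for small $h$. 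Thus the nonlinear update $W(t_n) \mapsto W(t_{n+1})$ is the projective realisation of a genuinely invertible linear map on the doubled space.

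\textbf{Nonlinear superposition.} The second property then follows at the projective level. I would introduce auxiliary variables $(P_n, Q_n)$ with $W_n = P_n Q_n^{-1}$ and verify, by the same calculation used to identify $M_n$, that the linear iteration $(P_{n+1}, Q_{n+1})^{T} = M_n (P_n, Q_n)^{T}$ projects to (\ref{difference-matrix-system}). Because the $(P,Q)$-iteration is linear, any solution is a linear combination of a finite basis of particular solutions; pushing this identity through the projection $P Q^{-1}$ translates it into the standard matrix-Riccati nonlinear superposition formula, expressing a general solution as a matrix M\"obius combination of a finite number of particular ones. In the scalar case this specialises to the subharmonic-ratio identity (\ref{subharmonic-ratio}).

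\textbf{Strong consistency.} For the third claim, a Taylor expansion of (\ref{difference-matrix-system}) about $t_n$, with $W(t_{n+1}) = W(t_n) + h W'(t_n) + O(h^2)$, immediately yields $W'(t_n) = A(t_n) + B(t_n) W(t_n) + W(t_n) C(t_n) + W(t_n) D(t_n) W(t_n) + O(h)$, so the scheme \emph{implies} (\ref{matrix-ricatti}) in the sense of Definition~\ref{def:10}. Strong consistency is the stronger statement that every element of the perfect difference ideal $\llbracket \tilde F \rrbracket$ generated by $\tilde F$ reduces, in the zero-step limit, to an element of the radical differential ideal generated by the right-hand side of (\ref{matrix-ricatti}). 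Following \cite{gerdt2019algorithmic,gerdt2020strong}, I would obtain this by computing a difference Thomas decomposition of $\tilde F$ and checking, triangular component by triangular component, that its Taylor limit lies in the radical differential ideal of (\ref{matrix-ricatti}). The linear-fractional structure from the first step keeps this feasible because it clears all implicit denominators in closed form.

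\textbf{Main obstacle.} The main difficulty is precisely the strong consistency statement, which is global in the perfect ideal and cannot be read off from a single Taylor expansion. The delicate step is controlling the difference-algebraic consequences generated by shifting the cleared-denominator relation, since $(I - hC - hDW)^{-1}$ introduces factors whose vanishing loci must be excluded uniformly in the shift index. The linear Hamiltonian lift of the first step is the natural tool for bypassing this: on the doubled space the shift becomes multiplication by the constant invertible matrix $M_n$, so all denominators are resolved linearly and the reduction of an arbitrary element of $\llbracket \tilde F \rrbracket$ becomes tractable.
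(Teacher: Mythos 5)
Your treatment of the first two properties is fine and in fact more explicit than the paper's: the paper simply records the solved-for form of $W(t_{n+1})$ and refers to Penskoi--Winternitz for the linear lift, which is exactly the M\"obius action of the block matrix $M_n$ you write down, so the invertible linear mapping and the superposition formula come out the same way in both arguments (your placement of the inverse factor, on the right of $(I+hB(t_n))W(t_n)+hA(t_n)$, is the one that actually matches the scheme (\ref{difference-matrix-system})).

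The genuine gap is in the third property, which is the crux of the theorem. You correctly identify that strong consistency is a statement about every element of the perfect difference ideal $\llbracket \tilde F\rrbracket$ and cannot be read off a single Taylor expansion, but your proposed resolution -- compute a difference Thomas decomposition of $\tilde F$ and check its components against the radical differential ideal of (\ref{matrix-ricatti}) -- is a plan, not a proof: no decomposition is exhibited, the membership checks are not performed, and the claim that the linear lift ``clears all implicit denominators'' so that the reduction of an arbitrary element of $\llbracket \tilde F\rrbracket$ becomes tractable is an expectation rather than an argument (the decomposition would have to be carried out for symbolic $n\times n$ matrices $A,B,C,D$, which is exactly the part you leave open). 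The paper avoids this computation entirely by a soft argument: assume the scheme is not strongly consistent, so some difference-algebraic consequence has a limit $S$ outside the radical differential ideal of (\ref{matrix-ricatti}); by the differential Nullstellensatz there is then a solution $\tilde W$ of the Riccati system on which $S$ does not vanish; since the Cauchy problem for (\ref{matrix-ricatti}) is well posed and the semi-implicit scheme converges (an analogue of Theorem~\ref{euler-theorem}), the difference solutions with initial value $\tilde W(t_0)$ converge to $\tilde W$, and since they satisfy every element of $\llbracket \tilde F\rrbracket$, the limit $\tilde W$ must satisfy $S$ -- a contradiction. To close your proof you would either need to import this convergence-plus-Nullstellensatz argument or actually carry out the difference Thomas decomposition and the ideal-membership verifications you defer to.
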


\begin{proof}
The explicit solution is given by formula
$$
W(t_{n+1}) = (I - hC(t_n) - hD(t_n)W(t_n))^{-1} (W(t_n)(I + hB(t_n)) + hA(t_n))\,,
$$
where the invertible matrix always exists for moderately small $h > 0$. For more details, we refer to the original paper \cite{penskoi2004discrete}.
Let us suppose it is not strongly consistent. Then, there exists an algebraic consequence of the difference matrix system (\ref{difference-matrix-system}) which in the limit of $h \rightarrow 0$ leads to the differential polynomial $S$ that is not lying in the radical of the differential ideal generated by the differential system of the matrix Ricatti equation (\ref{matrix-ricatti}). As it is not in the radical, we can find the solution $\tilde{W}$ of the differential system which does not satisfy $S$. System (\ref{matrix-ricatti}) is a well-posed Cauchy problem, so that every solution could be encoded by its initial value at some point (e.g., $\tilde{W}(t_0) = W_1$). In the difference setting, we can also find a solution with initial value $W_1$ at point $t_0$ which will converge to $\tilde{W}$ when $h \rightarrow 0$ by an analog of Theorem \ref{euler-theorem} (generalization to implicit and semi-explicit schemes) which immediately implies a contradiction as it should satisfies $S$.
\end{proof}

\section{Generalization and Open Questions}\label{sec:Generalization}
Our approach raises questions for several natural generalizations. Firstly, it is interesting to consider monomials on the right-hand side, which allow to have negative integer exponents. If functions $\xi^{i}_{k}(x)$ are represented as power series in the form of computable coefficients, is it still possible to decide algorithmically when the set is closed under the Lie bracket?

As we observe, the numerical scheme (\ref{good-scheme}) even outperforms the one obtained from Theorem \ref{euler-theorem} because it can cross the singularity points without an excessive accumulation of errors. It shows the importance of studying numerical schemes which preserve the property of nonlinear superposition.

\textbf{Open problem:} There are several ways how to transform a polynomial ODE system into a quadratic one by introducing new variables \cite{bychkov2021optimal,kerner1981universal}. Is it possible to transform a polynomial system into a Ricatti matrix system defined by (\ref{matrix-ricatti})?

\section{Conclusion}
\label{sec:Conclusion}

This contribution is devoted to the application of the theory of Newton polytopes for the verification of nonlinear superposition of first-order differential systems. Our viewpoint is that the presented methods assisted by computer algebra will be helpful for the construction and evaluation of appropriate discretizations of differential equations that appear in the applied sciences.

Also, we would like to emphasize that a lot of effort was invested into the field of Lie Algebras to classify all of them for some fixed dimension \cite{popovych2003realizations}. Typically, such kind of work involves the compilation of tables that describe all possible cases up to isomorphisms \cite{ibragimov2016three, ibragimov2017classification}. We truly believe that the development of symbolic algorithms provides the opportunity to extend powerful Lie formalism to a bigger audience.

Significant recent advancements in geometry \cite{trinh2024solving} demonstrate the effectiveness of machine learning as a tool for autonomously solving olympiad geometry problems. This underscores the significance of advancing symbolic analysis tools, which can serve as foundational components for future machine learning-driven automatic reasoning in algebra.

We conclude by anticipating that our approach will facilitate meaningful discussions and inspire productive future research endeavors.

\section*{Acknowledgments}
We wish to honor the memory of Vladimir P.~Gerdt, a real enthusiast for employing computer algebra in solving differential equations, with a particular emphasis on preserving their algebraic properties at a numerical level. This work has been partially supported by the baseline funding of the KAUST Computational Sciences Group. Veronika Treumova acknowledges KAUST support and hospitality during her research stay at KAUST.

\bibliographystyle{unsrt}
\bibliography{main}

\appendix
\section{Nonlinear superposition function}
The problem that was not discussed in the main text of the manuscript is how to find the superposition rule. This task could not be solved algorithmically as it requires solving a system of partial differential equations. However, it is still an interesting task from a symbolic point of view and requires proper heuristics. We illustrate it with an example of the scalar Ricatti equation. We need to prolong symmetry operators by adding their copies 
\begin{flalign*}
&eq1 := \frac{\partial F}{\partial x} + \frac{\partial F}{\partial x_1} + \frac{\partial F}{\partial x_2} + \frac{\partial F}{\partial x_3} = 0\,, \\
&eq2 := x \frac{\partial F}{\partial x} + x_1 \frac{\partial F}{\partial x_1} + x_2 \frac{\partial F}{\partial x_2} + x_3 \frac{\partial F}{\partial x_3} = 0\,, \\
&eq3 := x^2 \frac{\partial F}{\partial x} + x_1^2 \frac{\partial F}{\partial x_1} + x_2^2 \frac{\partial F}{\partial x_2} + x_3^2 \frac{\partial F}{\partial x_3} = 0\,.
\end{flalign*}
The built-in Maple command \textit{pdsolve} cannot find any solution. However, if we integrate only the first two equations, then it yields the solution
$$
F(x, x_1, x_2, x_3) = F_1 \left(\frac{-x + x_2}{x - x_1}, \frac{-x + x_3}{x - x_1} \right)\,.
$$
The change of independent variables
$$
(x,x_1,x_2,x_3) \mapsto (w,v,x_2,x_3)\,,
$$
where
$$
w = \frac{-x + x_2}{x - x_1}, \,v =  \frac{-x + x_3}{x - x_1}
$$
defines a local diffeomorphism almost everywhere because the Jacobi matrix
$$
\left[\begin{array}{cccc}
\frac{\mathit{x_1} -\mathit{x_2}}{\left(x -\mathit{x_1} \right)^{2}} & \frac{-x +\mathit{x_2}}{\left(x -\mathit{x_1} \right)^{2}} & \frac{1}{x -\mathit{x_1}} & 0 
\\
 \frac{\mathit{x_1} -\mathit{x_3}}{\left(x -\mathit{x_1} \right)^{2}} & \frac{-x +\mathit{x_3}}{\left(x -\mathit{x_1} \right)^{2}} & 0 & \frac{1}{x -\mathit{x_1}} 
\\
 0 & 0 & 1 & 0 
\\
 0 & 0 & 0 & 1 
\end{array}\right]
$$
has a non-zero determinant (except for the set of zero measure),
$$
\frac{\mathit{x_2} -\mathit{x_3}}{\left(x -\mathit{x_1} \right)^{3}} \ne 0\,.
$$
In new variables, the remaining equation $eq3$ could be rewritten as
$$
\left(w +1\right) w \left(\frac{\partial}{\partial w}\mathit{F_1} \! \left(w , v\right)\right)+\left(v +1\right) v \left(\frac{\partial}{\partial v}\mathit{F_1} \! \left(w , v\right)\right) = 0\,,
$$
which could be easily be integrated as
$$
\mathit{F_1} \! \left(w , v\right) = 
\mathit{F_2} \! \left(-\frac{v -w}{v \left(w +1\right)}\right)\,.
$$
Substituting $w$ and $v$ via $x_1, x_2, x_3, x$ and assuming $x = x_4$, we immediately obtain (\ref{subharmonic-ratio}). The general problem of constructing a superposition rule is out of scope for this paper, but we find it interesting as an applied problem in symbolic computation. Generally speaking, the nonlinear superposition should not distinguish which is the first and which is the second solution. Hence, it is required to construct a solution that is also a symmetric function of its variables.


\end{document}